\newtheorem{theorem}{Theorem}
\newcommand{\ket}[1]{\left|#1\right\rangle}
\definecolor{brickred}{rgb}{0.8, 0.0, 0.0}
\begin{document}

\title{Graph State Fission}


\author{Jorge Miguel-Ramiro}

\author{Wolfgang D\"ur}
\affiliation{Universit\"at Innsbruck, Institut f\"ur Theoretische Physik, Technikerstra{\ss}e 21a, Innsbruck 6020, Austria}

\date{\today}
\begin{abstract}
Graph states are a fundamental entanglement resource for multipartite quantum applications which are in general challenging to transform efficiently. While fusion operations for merging entangled states are well-developed, no direct protocol exists for the reverse process, which we term “fission”. We introduce a simple, yet powerful, protocol that achieves this, allowing a qubit to split while preserving selective connections with minimum entanglement overhead. This tool offers flexible entanglement management with potential applications in secure communication, error correction and adaptive entanglement distribution.

\end{abstract}

\maketitle


\textbf{\textit{Introduction.--}} Entanglement is a fundamental resource for quantum computation and quantum communication, underpinning key applications such as teleportation \cite{Pirandola_2015, Ren_2017, Hu_2023}, quantum sensing \cite{Giovannetti_2011, Kessler2014, Sekatski2020}, distributed quantum computing \cite{CiracDistributed, Hayashi15, Cacciapuoti2020}, or quantum cryptography \cite{Gisin2002, Pirandola_2020}. As quantum technologies advance, the generation, distribution, and manipulation of multipartite entangled states becomes ever more crucial \cite{Horodecki2009, Walter_2016, Pirker_2019, Meignant2019, Navascues2020, Zhong_2021, Miguel_Ramiro_2023, Fan2024}. Among these, graph states stand out as a prominent form of multipartite entanglement \cite{Gottesman1997, Hein2004, Hein2006}, fundamental to the development of scalable quantum networks. Manipulating and transforming graph states, however, remains a significant challenge. Achieving desired transformations often requires substantial entanglement resources to modify connections between qubits, making efficient state manipulation complex and resource-demanding.

One well-studied operation in this context is graph state fusion \cite{Knill_2001, Browne2005,zdemir_2011}, which merges two qubits to form larger or more entangled structures. Quantum fusion has obtained considerable interest due to its practical applications, particularly in quantum repeater protocols \cite{Dur99, Azuma2023, Miguel2023, Thomas_2024}, quantum computing \cite{Browne2005, yoran2003, Bartolucci_2023, Paesani2023}, or the generation and growth of multipartite states in quantum networks \cite{Fischer_2021, Bugalho_2023, Avis2023, rimock2024}. Various experimental proposals and implementations \cite{Pan98, Zhang2006} have further supported its role in enhancing entanglement distribution across distributed systems.

Here, we introduce the reverse operation, which we denote as ``fission'', where a single qubit is split into multiple qubits to create smaller and selectively connected substructures. This task has not been directly addressed, instead, straightforward approaches like edge and vertex removal via qubit measurements are typically used. These methods, however, are often suboptimal, introducing larger entanglement costs or disrupting the states underlying structure. We introduce a graph state fission protocol that allows for partitioning of a graph state qubit, preserving selected connections with minimal additional entanglement. The protocol utilizes an auxiliary Bell or GHZ state shared among specific neighbors to retain desired connections, achieving selective fission while maintaining the integrity of the graph entanglement structure. This approach enhances both resource efficiency and security by limiting interactions to only the qubits involved in the fission process. Furthermore, we demonstrate that this protocol is optimal in the basic scenario in terms of entanglement resources required, offering new capabilities for entanglement manipulation in quantum networks and secure quantum communication.



\textbf{\textit{Basic concepts.--}} We begin by briefly reviewing some basic concepts and notation used throughout this paper.  

Graph states \cite{Gottesman1997, Hein2004, Hein2006}, are multi-qubit quantum states represented by a graph $G=(V,E)$, where vertices V correspond to qubits, and edges E represent entanglement between connected qubits. Defined as the unique +1 eigenstates of stabilizer operators \cite{Gottesman1997} associated with each qubit, $K_{a} = \sigma_{x}^{(a)} \prod_{(a,b) \in E} \sigma_{z}^{(b)}$,
for all $a \in V$, graph states serve as a fundamental resource in quantum computing \cite{Rauss2001, Nielsen04}, especially in one-way computation and entanglement-based protocols \cite{Browne2005, Walther_2005, Epping_2016, Pirker_2018, Miguel_Ramiro_2021}.

Important instances of graph states include two-qubit Bell states and $m$-qubit GHZ states, given by $\ket{\Phi^+} = \frac{1}{\sqrt{2}}\left( \ket{00}+ \ket{11} \right)$ and $\ket{\mathrm{GHZ}_m} = \frac{1}{\sqrt{2}}\left( \ket{0}^{\otimes m} + \ket{1}^{\otimes m} \right)$
respectively, up to local unitaries.

A fundamental transformation in graph states is local complementation \cite{Hein2004, Hein2006}, a unitary operation that modifies the connections among certain qubit neighbors, leading to an equivalent graph state. Furthermore, Pauli measurements on a qubit of a graph state invoke local complementations on some of the qubits, followed by a deletion of all the incident edges to the qubit \cite{Hein2004, Hein2006}.  

Moreover, we use the concept of ebit of entanglement \cite{nielsen2002quantum, Horodecki2009}, defined as the unit of bipartite entanglement, i.e., the amount of entanglement contained in a Bell state. This allows us to evaluate the amount of entanglement between bipartitions $(A,B)$ of a graph state, where $A$ and $B$ are disjoint subsets of vertices of the graph, such that $A \cup B= V$.



\textbf{\textit{Graph state fission.--}} We propose a protocol enabling the separation of any graph state through a fission process on a selected qubit, preserving all existing edges, as shown in Fig.~\ref{fig:fission0}. The protocol utilizes a minimal amount of additional entanglement (that can be proven optimal in its basic configuration, detailed below) and ancillary qubits, ensuring selective connectivity between the neighbors of the split qubits. Crucially, this protocol incorporates inherent security, requiring only the collaboration of qubits directly involved in the fission process.
\begin{figure}
    \centering
    \includegraphics[width=0.95\columnwidth]{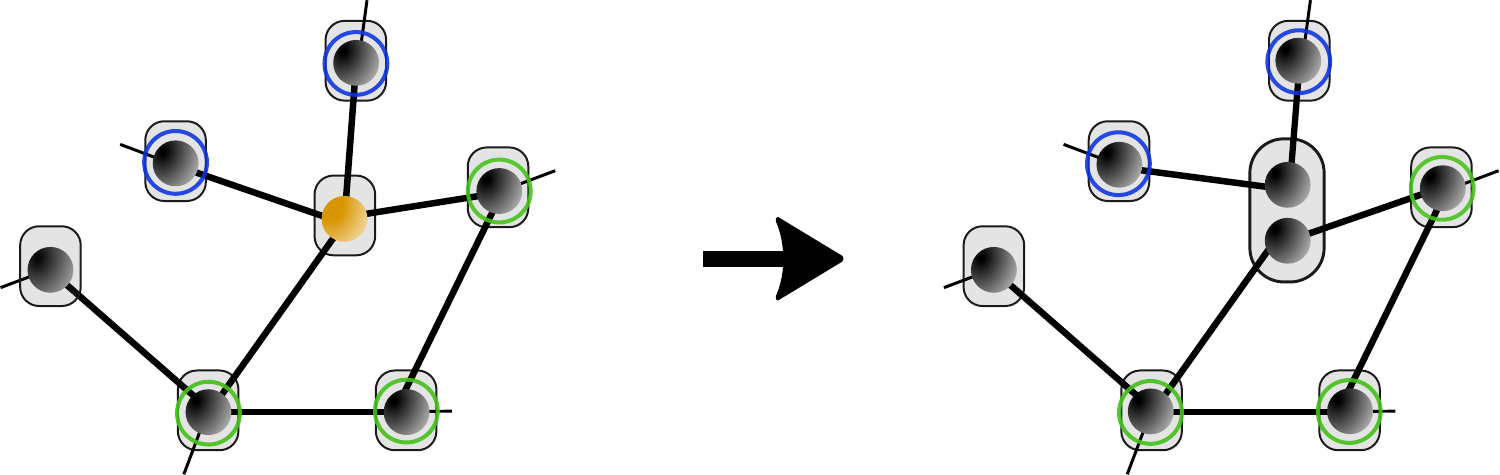}
    \caption{Graph state fission. Given a selected qubit (yellow) on an arbitrary graph, the protocol achieves to split such qubit into two, keeping all edges while choosing the neighbors (blue or green) connected to each of the remaining split qubits. Additional entanglement is required to accomplish this task.}
    \label{fig:fission0}
\end{figure}

The manipulation and transformation of graph states are well-studied challenges in quantum information theory \cite{Hein2004, Hein2006}, with general approaches shown to be NP-complete \cite{Dahlberg2018,Dahlberg2020}. In our scenario, achieving fission without supplementary entanglement would be impossible, as the resulting graph states typically exhibit increased entanglement features.

Existing tools \cite{Hein2004, Hein2006, Meignant2019,Hahn_2019, Fischer_2021, Miguel_Ramiro_2023} for solving the fission problem would typically rely on edge deletion via qubit measurements, followed by entanglement regeneration. A Pauli Z-measurements of a qubit $a$ leads to a deletion of all edges that include $a$ \cite{ Hein2006,Hahn_2019}. This leads to undesirable additional entanglement losses and increased regeneration demands, and results in higher overheads compared to our protocol. Moreover, these methods would frequently compromise security, as they require interactions involving qubits beyond those directly participating in the fission process, a drawback that our protocol avoids by design.

While our fission protocol may appear simple, it is, in fact, a non-trivial tool that enables precise, selective fission operations on any qubit within a graph state.

\subsubsection{Fission protocol 1. One neighbor} 
We introduce our proposed fission protocol by first addressing the simplest case, where only one of the original neighbors is carried along with the split subgraph,  and then generalizing to more complex scenarios. We demonstrate the optimality of this approach regarding the minimal additional entanglement required to complete it.
\begin{figure*}
    \centering
    \includegraphics[width=0.85\textwidth]{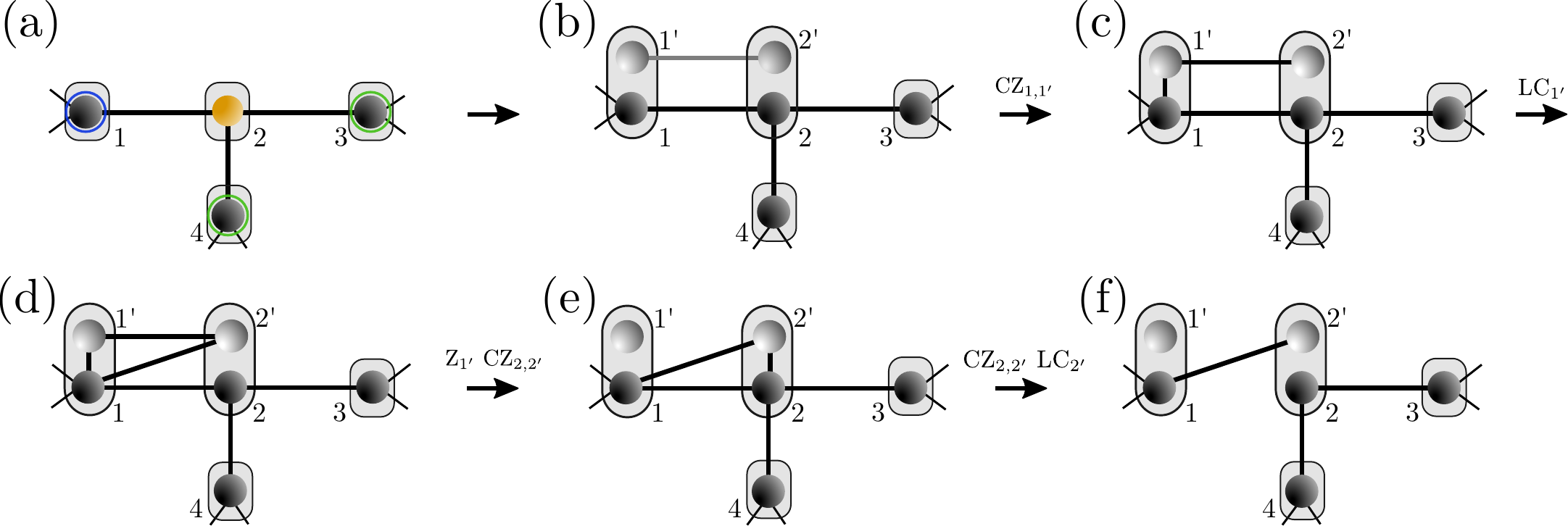}
    \caption{Illustration of the fission protocol 1: qubit $2$ is split, with qubit $1$ maintained as a neighbor, resulting in two subgraph states, disconnected from each other in case the neighborhoods were not originally connected. An auxiliary Bell state provides the minimal entanglement needed for this process. }
    \label{fig:fission1}
\end{figure*}

Fig.~\ref{fig:fission1} illustrates the steps in our fission protocol for the simplest splitting case. Starting with an arbitrary graph state in (a), we aim to split the qubit at node 2 into two qubits, such that its neighbor at node 1 remains connected only to the newly formed qubit.

An auxiliary Bell state (1',2') is prepared between nodes 1 and 2, ensuring qubits (1,1') and (2,2') are co-located for joint operations, see Fig.~\ref{fig:fission1} (b). Qubits 1 and 1' are then locally entangled using controlled-Z operations (c), followed by a local complementation operation on qubit 2' (d). Then, qubit 1' is measured in the Pauli Z basis to remove its edges, while an entangling operation is applied between qubits 2 and 2' (e). Finally, a local complementation on qubit 2', followed by disentangling operation on qubits 2 and 2', completes the protocol (f).

The result is two independent subgraph states (in case the neighborhoods were not originally connected to each other). Qubit 1 is now connected to qubit 2' in one of the subgraphs, while every other connection remains unaffected. 
Only active participation of nodes 1 and 2 is required, while their second and subsequent neighborhoods play no role in the process.

Likewise, the qubit can be further partitioned into a larger number of elements using the same method, by simply introducing an auxiliary Bell state for each neighboring pair intended for separation, without removing any connections or modifying the intrinsic structure of the original graph, apart from the fission process.

\textit{Optimality.--} The fission approach introduced above, Fig.~\ref{fig:fission1}, is optimal regarding the additional entanglement required for their execution.

\begin{theorem}
The minimum amount of entanglement necessary to apply one fission to any qubit in a connected graph state, carrying along one of its original neighbors is, at least, one ebit.
\end{theorem}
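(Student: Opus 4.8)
The plan is to lower-bound the required entanglement by a monotonicity argument: any fission protocol consists of local operations at the graph nodes, assisted by classical communication and some auxiliary entangled resource, so the entanglement across any fixed spatial bipartition cannot exceed its initial value plus whatever entanglement the resource supplies across that cut. I would therefore exhibit one bipartition across which fission necessarily raises the entanglement by a full ebit.

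First I would fix the cut $\mathcal{A}|\mathcal{B}$ that isolates the location of the qubit undergoing fission: $\mathcal{A}$ holds the original qubit $2$ together with the newly created qubit $2'$ (both co-located at the same node), while $\mathcal{B}$ comprises every other qubit, including the carried neighbor $1$ and the remaining neighbors $N(2)\setminus\{1\}$. In the input graph state $\mathcal{A}$ is the single vertex $2$; since the graph is connected, its reduced state is maximally mixed, giving exactly one ebit across the cut. I would justify this through the standard graph-state fact that the entanglement entropy of a bipartition equals the $\mathrm{GF}(2)$ rank of the off-diagonal block of the adjacency matrix (the cut rank), which here is the single nonzero row of vertex $2$.

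Next I would compute the output entanglement across the same cut. After fission, qubit $2$ retains its edges to $N(2)\setminus\{1\}\subset\mathcal{B}$ while qubit $2'$ carries the edge to vertex $1\in\mathcal{B}$. The cut-rank matrix then has two rows with disjoint support---one on $N(2)\setminus\{1\}$, the other on $\{1\}$---hence rank $2$, so the output carries two ebits across $\mathcal{A}|\mathcal{B}$. Crucially, this holds whether or not the two neighborhoods were already linked elsewhere in the graph, so the argument does not presuppose that the output splits into disconnected components. Combining the two computations, fission raises the entanglement across this cut from one to two ebits.

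Finally I would invoke LOCC monotonicity of the entropy of entanglement: every gate in a node-local fission (controlled-$Z$ operations and measurements confined within a node, local complementations, and classical communication) acts locally with respect to $\mathcal{A}|\mathcal{B}$ and thus cannot increase the entanglement across it. The extra ebit can only be accounted for by the auxiliary resource, which must therefore supply at least one ebit across the cut, hence at least one ebit overall. The main obstacle I anticipate is making the statement fully protocol-independent: one must argue that the split qubit $2'$ is genuinely produced at the node of qubit $2$ (so that the chosen cut is respected by every admissible protocol), and that a nontrivial fission requires $|N(2)|\ge 2$, since if qubit $2$ had only the carried neighbor it is left isolated and the entanglement across the cut does not increase.
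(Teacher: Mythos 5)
Your proposal is correct and follows essentially the same route as the paper: both lower-bound the cost by comparing the entanglement across the bipartition that isolates the central node before fission (one ebit, from connectivity) and after fission (two ebits), and then invoke monotonicity under local operations to conclude that the auxiliary resource must supply the missing ebit. Your cut-rank computation of the post-fission entanglement, together with the explicit caveat that $|N(2)|\ge 2$ is needed for the two rows to be independent, is in fact a more careful justification of the step the paper handles by loosely extending the 1-uniformity argument to the two-qubit party.
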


\begin{proof}
Consider the scenario treated in Fig.~\ref{fig:fission1}. Note first that a connected graph state is a 1-uniform state \cite{Hein2006,Sudevan2022}, meaning that the reduced density operator of a single qubit is always maximally mixed, indicating it shares precisely one ebit of entanglement with the rest. After the fission process, however, two qubits occupy the central node. By the same 1-uniformity argument, this new two-qubit party shares two ebits of entanglement with the remaining nodes.

According to the fact that local operations alone cannot increase entanglement across any bipartition \cite{Hein2004}, we conclude that, at least, introducing one additional ebit of entanglement is required for the fission operation to maintain the entanglement features between the central node and the rest of the graph.
\end{proof}

The optimality result follows directly from the fact that our protocol introduces just a single Bell pair, which provides precisely one ebit of entanglement, the minimal amount required for the fission process.

A similar argument demonstrates that in order to perform $k$ fission operations to a certain qubit, at least $k$ additional ebits of entanglement are required.

\subsubsection{Fission protocol 2. Arbitrary neighbors}
We now turn our attention to a more general scenario, where, given an arbitrary graph state, a fission operation is applied to one of the qubits, enabling the selection of specific neighbors to remain connected with the split qubit, wihtout removing the rest of the edges. The protocol, closely related to the one presented above, is depicted in Fig.~\ref{fig:fission2}, where we now utilize a $m$-qubit GHZ state. The size of the GHZ state corresponds to the minimum size of each set of neighbors that are split along the fission qubit. 
\begin{figure*}
    \centering
    \includegraphics[width=0.8\textwidth]{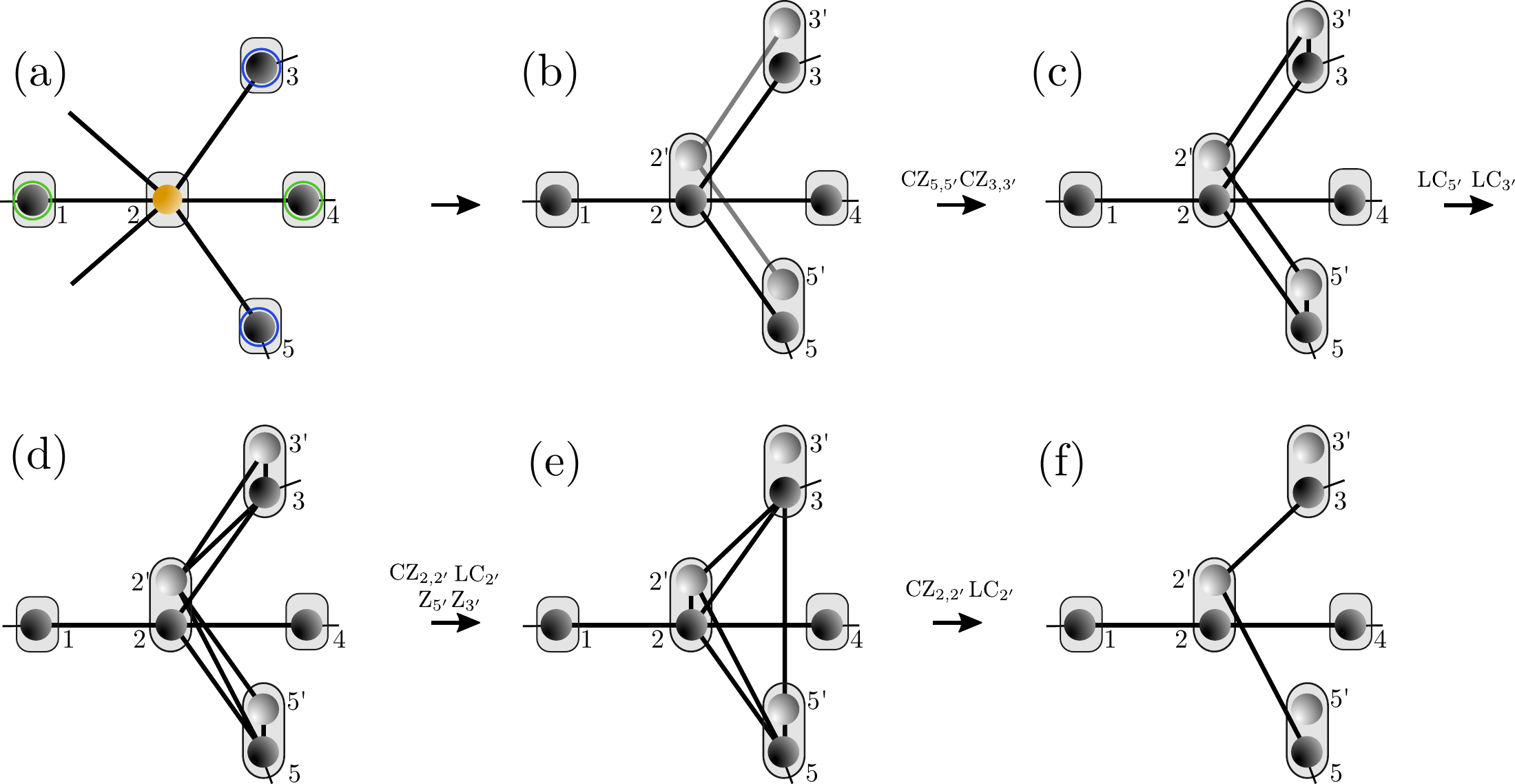}
    \caption{Fission protocol 2 for arbitrary neighbor selection: qubit $2$ is split, carrying along qubits $3$ and $5$ as its connected neighbors, and leaving the rest of the graph untouched. An additional GHZ state is required for this task, shared between the desired partitioned neighbors ($2,3,5$). }
    \label{fig:fission2}
\end{figure*}

Consider an arbitrary graph, Fig.~\ref{fig:fission2} (a) where a selected qubit (2) shares graph edges with its neighbors. We omit other possible edges for illustrative purposes. The objective is to split qubit 2 into two, such that certain chosen initial neighbors (3 and 5) are carried along with the new qubit, while the other edges remain with the original qubit. For that purpose, we use a GHZ state (b), shared between the involved parties (a three-qubit GHZ between nodes 2, 3 and 5 in our case). Entangling operations are then performed within nodes 2 and 5, linking the main graph state to the auxiliary GHZ state. Subsequent local complementations are applied in particles 3' and 5' (d), followed by local Z measurements of such qubits (e). Entangling qubits 2 and 2' followed by a local complementation of particle 2' and a disentangling operation (f), completes the protocol. This process yields a graph state, with qubits 3 and 5 connected to the newly formed qubit, while the rest of the graph remains unaffected.

This fission strategy is fully general, allowing for the selective application of fission operations on any qubit within a graph state while choosing which neighbors remain connected to each of the resulting split qubits. Additionally, the fission process can be repeated multiple times, enabling the division of a single qubit into $k$ particles. This flexibility significantly enhances the capacity for manipulation of multipartite entangled states.

Importantly, we remark that only the qubits directly involved in the fission process actively participate, maintaining security and limiting interactions to essential or trusted parties.

\textit{Minimizing resources.--}  Despite the generality of this process, the use of a GHZ state does not always guarantee minimal resource consumption. Depending on the specific connectivity of the graph state, it is often possible to identify locally equivalent graph states that reduce the entanglement overhead. 

A simple example is shown in Fig.~\ref{fig:fission3}. Here, the fission protocol aims to separate a central node (c) while retaining three neighbors with the split qubit. Given the structure of the initial graph state, it is possible to find a local unitary (LU) equivalent state in which the neighborhood of the central node is reduced, yet preserves the structures designated for fission. This adjustment minimizes the required entanglement from a four-qubit GHZ state to just one Bell pair (as in Protocol 1), achieving the desired separation with lower resource overhead. Notice that the resulting graph states are LU equivalent, and in fact local complementation on all vertices except the central node (c) can performed to minimize the local degree of (c) and hence the required size of GHZ state to perform fission. 
\begin{figure}
    \centering
    \includegraphics[width=0.9\columnwidth]{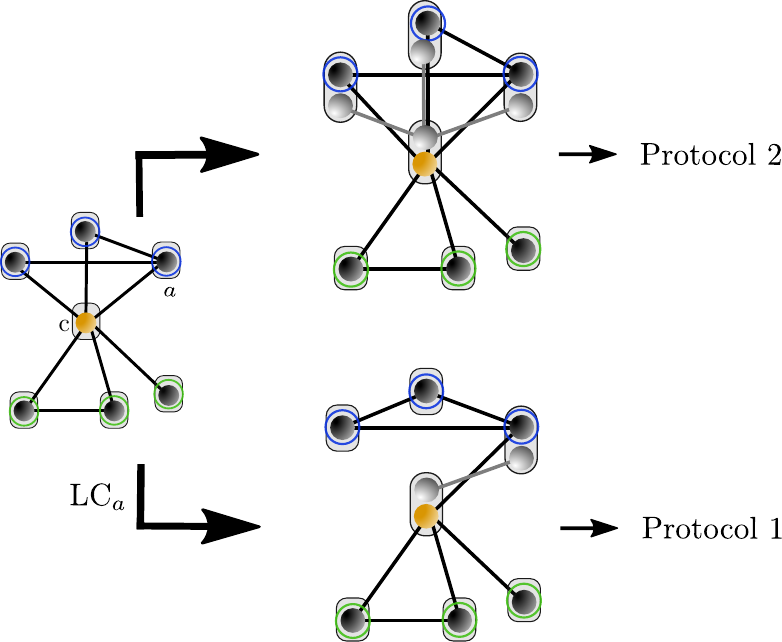}
    \caption{Example. Given this graph state where the blue neighborhood should be part of the split qubit, one can directly apply protocol 2 using a four-qubit GHZ state, Fig.~\ref{fig:fission2}, or find a local equivalent graph that allows to apply fission protocol 1, Fig.~\ref{fig:fission1}, therefore optimizing resources.}
    \label{fig:fission3}
\end{figure}
By leveraging local operations, we can thus optimize the protocol resource demands, enhancing its overall efficiency.

\textbf{\textit{Applications.--}} We now discuss possible applications and benefits of the fission strategies introduced above. In some situations, the synergic combination of fission and fusion techniques can lead to significant benefits. Some examples include: 

\textit{Improving other protocols performance.--}
Numerous protocols in quantum computation and communication are sensitive to the size of the quantum states involved. For example, entanglement purification protocols \cite{Bennett1996, Deutsch_1996, Dur07} are typically more efficient when applied to smaller states, compared to larger or arbitrary graph states \cite{Dur03, Dur06, Dur07}. The performance gap between small and complex states is considerable, impacting both applicability and efficiency \cite{Dur03, Dur06, Dur07}. The ability to partition larger states into smaller, manageable subgraphs (without compromising their structure), purify them, and subsequently reassemble them via fusion operations could be significantly beneficial. 

In a similar spirit, in modular quantum architectures \cite{Monroe2014, Pirker_2018, Bombin21}, isolating subgraphs without losing the entanglement structure can improve fault tolerance processes, both in quantum computation \cite{PRESKILL_1998, Katabarwa2024} and communication \cite{Childress2006, Mural2014}. By using fission to isolate noisy qubits or manage errors locally, error correction can be applied while keeping other qubits unaffected. This approach could allow modular quantum computers to handle errors more effectively, also potentially avoiding the unwanted effects of noise crosstalk \cite{Murali_2020}.  By maintaining entanglement only within necessary subgraphs, the protocol aids in local error correction and limits the spread of noise across a distributed system, which is critical for fault-tolerance quantum computing.

\textit{Entanglement manipulation, distribution and routing.--}
Distribution of entanglement among distant parties is a key subject of study in quantum networks \cite{Avis2023, Bugalho_2023, Khan_2019, Cacciapuoti_2024, Zurita2024}. The ability to split graph states while retaining key entangled connections can enable a more flexible distribution of such entanglement. By allowing selective graph state separation, the fission protocol can help to manage entanglement resources dynamically, optimizing the network performance without requiring additional complex state preparation for each task. This capability can be especially valuable in hierarchical or multi-layered network structures \cite{Pirker_2019, Miguel_Ramiro_2023, Xu_2009, Davarzani_2022}, where managing entanglement across various levels or clusters is required.

A quantum network might also have fixed structures for the generation and distribution of multipartite entanglement.  Assisted with fission operations, one can generate and establish different kinds of states from the fixed structures provided by the network, without the need to generate entanglement in different ways at the network level.  In the context of entanglement-based quantum networks \cite{Pirker_2018, Pirker_2019}, the fission tool provides an a flexible possibility of adapting entanglement resources upon network demand.

In terms of entanglement routing \cite{Hahn_2019, Pant_2019, Li_2021, Zeng_2022}, fission can provide a method for creating on-demand entanglement paths, optimizing these paths without constantly regenerating entangled states. For instance, if only specific network paths are required, fission can isolate these paths from the larger entangled structure, thereby optimizing resource allocation across the network, while preserving the overall remaining network state.

\textit{Security aspects.--}
Graph state fission could enhance quantum secret sharing \cite{Hillery99, Xiao2004} and other multiparty cryptographic protocols \cite{Epping_2017, Zhou18, Memmen2023} by allowing certain entangled nodes to retain connections only with selected parties. This selective entanglement control could be useful for secure protocols, where specific subgroups must remain entangled while ensuring the rest of the network is isolated. This application could be extended to more general scenarios, where parts of a network are not trustworthy and can be disconnected from the rest, without their collaboration nor losing structure, by fission means. Given the inherent security of our protocols, where only involved parties in the fission process need to collaborate to separate others, one can expect to enhance both flexibility and security in multiparty cryptographic protocols.

In summary, the fission tool introduced in this work has the potential to significantly enhance quantum network management, improve efficiency in entanglement-based applications, and provide robust, flexible solutions for network and modular quantum systems. Its applications can be applied in various areas where efficient and dynamic manipulation of entangled states is essential for scalability, performance, and resource optimization.


\textbf{\textit{Conclusions.--}} In this work, we have introduced an optimal protocol for applying fission to a qubit within an arbitrary graph state, allowing it to be split into two —or more— qubits, with selective control over the neighbors retained by each resulting qubit. We have shown that our approach is resource-efficient, requiring minimal additional entanglement, and that it surpasses straightforward methods in terms of efficiency, security, and overhead.

This fission protocol represents a novel tool in the manipulation, transformation, and distribution of multipartite entangled states, potentially enhancing applications across quantum communication, cryptography, and modular computing. 

Future research could explore extensions of this approach to other classes of entangled states beyond graph states, examining how these generalized fission operations could broaden the scope of potential applications. Additionally, investigating transformations under local unitaries may increase the protocol flexibility and adaptability in various quantum architectures. Further analysis of practical applications and experimental implementations will be important for realizing the full potential of this technique.

\textbf{\textit{Acknowledgments.--}} This research was funded in whole or in part by the Austrian Science Fund (FWF) 10.55776/P36009 and 10.55776/P36010. For open access purposes, the author has applied a CC BY public copyright license to any author-accepted manuscript version arising from this submission.

\bibliographystyle{apsrev4-2}
\bibliography{Fission}

\clearpage


\end{document}